\numberwithin{figure}{section}%
\numberwithin{table}{section}%
\numberwithin{equation}{section}%
\newcommand{\xbeginlgox}{\begin{minipage}{1in}\begin{tabbing}
           \quad\=\qquad\=\qquad\=\qquad\=\qquad\=\qquad\=\qquad\=\kill}
        \newcommand{\xendlgox}{\end{tabbing}\end{minipage}}
\newenvironment{program}{
   \begin{minipage}{4.0in}
   \begin{tabbing}
       \ \ \ \ \= \ \ \ \= \ \ \ \ \= \ \ \ \ \= \ \ \ \ \=
      \ \ \ \ \= \ \ \ \ \= \ \ \ \ \= \ \ \ \ \=
      \ \ \ \ \= \ \ \ \ \= \ \ \ \ \= \ \ \ \ \= \kill
}{
   \end{tabbing}
   \end{minipage}
}
\newcommand{\nnY}[2]{\nu^{}_{#2}(#1)}
\newcommand{\NNY}[2]{N_{\leq #2}(#1)}
\newcommand{\myqedsymbol}{\rule{2mm}{2mm}}
\newcommand{\cardin}[1]{\left| {#1} \right|}%
\newcommand{\Graph}{{G}}%
\newcommand{\Vertices}{{V}}%
\newcommand{\Edges}{{E}}%
\newcommand{\dirEdge}[2]{{{#1} \rightarrow {#2}}}
\newcommand{\pth}[2][\!]{\mleft({#2}\mright)}%
\newcommand{\brc}[1]{\left\{ {#1} \right\}}
\newcommand{\distG}[2]{d_{\Graph}\pth{#1, #2}}
\renewcommand{\th}{th\xspace} \newcommand{\atgen}{\symbol{'100}}
\newcommand{\SarielThanks}[1]{\thanks{Department of Computer Science;
      University of Illinois; 201 N. Goodwin Avenue; Urbana, IL,
      61801, USA; {\tt sariel\atgen{}illinois.edu}; {\tt
         \url{http://sarielhp.org/}.} #1}}
\newtheorem{theorem}{Theorem}[section]
\newtheorem{lemma}[theorem]{Lemma}
\theoremstyle{plain}%
\newtheorem{remark}[theorem]{Remark}%
\newcommand{\HLink}[2]{\hyperref[#2]{#1~\ref*{#2}}}
\newcommand{\thmlab}[1]{{\label{theo:#1}}}
\newcommand{\thmref}[1]{\HLink{Theorem}{theo:#1}}
\newcommand{\pbrcx}[1]{\left[ {#1} \right]}%
\newcommand{\Prob}[1]{\mathop{\mathbf{Pr}}\!\pbrcx{#1}}
\definecolor{blue25}{rgb}{0, 0, 11}
\newcommand{\emphic}[2]{%
   \textcolor{blue25}{%
      \textbf{\emph{#1}}}%
   \index{#2}}
\newcommand{\emphi}[1]{\emphic{#1}{#1}}
\theoremstyle{nonumberplain}
\newtheorem{proof}{Proof:}
\begin{document}

\title{Computing the $k$ Nearest-Neighbors for all Vertices via
   Dijkstra}

\author{Sariel Har-Peled\SarielThanks{Work on this paper
      was partially supported by a NSF AF awards
      CCF-1421231, and 
      CCF-1217462.  
   }}

\date{\today}

\maketitle

\begin{abstract}
    We are given a directed graph $\Graph = (\Vertices,\Edges)$ with
    $n$ vertices and $m$ edges, with positive weights on the edges,
    and a parameter $k >0$. We show how to compute, for every vertex
    $v \in \Vertices$, its $k$ nearest-neighbors. The algorithm runs
    in $O( k ( n \log n + m ) )$ time, and follows by a somewhat
    careful modification of Dijkstra's shortest path algorithm.

    This result is probably folklore, but we were unable to find a
    reference to it -- thus, this note.
\end{abstract}


\section{The problem}

Let $\Graph = (\Vertices, \Edges)$ be an undirected graph with $n$
vertices, $m$ edges, and with positive weights on the edges. For
$u,v \in \Vertices$, let $\distG{u}{v}$ denote the shortest path
distances in $\Graph$, and assume the sake of the simplicity of
exposition that all the non-trivial shortest path distances in the
graph are distinct.

A vertex $u = \nnY{v}{i}$ is the \emphi{$i$\th nearest-neighbor} to
$v$, if one partition $\Vertices$ into three disjoint sets
$C, \{u_i\}, F$, such that
\begin{inparaenum}[(i)]
    \item $\cardin{C} = i-1$, and
    \item $\forall c \in C$, $\forall f \in F$ we have
    $\distG{c}{v} < \distG{u}{v} < \distG{f}{v}$.
\end{inparaenum}
Observe that $\nnY{v}{0} = v$.  For any integer $k$, the \emphi{$k$
   nearest-neighbors} to $v$, are the members of the set
\begin{align*}
    \NNY{v}{k} = \brc{ \Bigl. \nnY{v}{0}, \nnY{v}{1}, \ldots,
   \nnY{v}{k-1}};
\end{align*}
that is, they are the $k$ vertices in $\Graph$ that are closest to
$v$.

Our purpose here is to compute for every vertex $v \in \Vertices$, the
set $\NNY{v}{k}$; that is, to compute for $v$ the $k$ distinct
vertices closest to it.

\section{A simple (but slower) randomized algorithm}

\subsection{Algorithm}
Let $t = O( k \log n)$, and let $R_1, \ldots, R_t$ be random samples
from $\Vertices$, where every vertex is picked into the $i$\th random
sample with probability $p=1/k$. Now, compute for every vertex
$v \in \Vertices$ its distance from its nearest-neighbor in $R_i$, for
all $i$. For a specific $i$, this can be done by performing Dijkstra
in $\Graph$ starting from all the vertices of $R_i$ simultaneously
(i.e., we create a fake source vertex $s$, add it to $\Graph$, and add
edges of weight $0$ from $s$ to all the vertices of $R_i$, and perform
regular Dijkstra from $s$). For a vertex $v$, let $D'(v)$ be its $t$
candidate nearest-neighbors computed by these $t$ executions of
Dijkstra. Using hashing, remove duplicates in $D'(v)$ (i.e., a vertex
$s$ might be the nearest-neighbor for $v$ in several of these
executions -- note however that in such a case it is always the same
distance). Now, compute the $k$ vertices with the smallest numbers
associated with them in $D'(v)$, and let $D(v)$ be the resulting set
of nearest-neighbors. We claim that, for all $v$, the sets $D(v)$ are
the desired $k$ nearest-neighbors.

\newcommand{\Event}{\mathcal{E}}

\subsection{Analysis}

For the running time, observe that performing  Dijkstra shortest
path algorithm $t$ times takes $O( t (n + m \log n) )$ time. All the
other work is dominated by this. 

As for correctness, we need to argue that if $u$ is the $j$\th
nearest-neighbor to $v$, for $j \leq k$, then $u \in D(v)$. To this
end, consider the event $\Event_i$ that $u \in R_i$, and none
of the vertices of $\NNY{v}{j-1}$ are in $R_i$. We have that 
\begin{align*}
    \alpha%
    =%
    \Prob{\Event_i}%
    =%
    p (1-p)^{j-1}%
    \geq%
    (1-1/k)^{k} /k%
    \geq%
    \exp( -1/2k)^k /k%
    \geq%
    1/10k.
\end{align*}
If this happens, then $u \in D'(v)$, which in turn implies that
$u \in D(v)$, as desired. The probability that none of the events
$\Event_1, \ldots, \Event_t$ happens is
$(1-\alpha)^t \leq (1-1/10k)^t < 1/n^{c}$ by making $t$ sufficiently
large, where $c$ is an arbitrary constant. Since there are $n k$ pairs
of $(u,v)$ such that $u$ is one of the $k$ nearest neighbors to $v$,
it follows that the probability this algorithm fails is at most
$1/n^{c-2}$.

\subsection{The result}

\begin{lemma}%
    %
    Given a directed graph $\Graph = (\Vertices,\Edges)$ with $n$
    vertices and $m$ edges, with positive weights on the edges, and a
    parameter $k >0$, one can compute, in
    $O\pth{ \pth{ n \log n + m} k \log n}$ time, for every vertex
    $v \in \Vertices$, its $k$ nearest-neighbors in $\Graph$. The
    algorithm succeeds with high probability.
\end{lemma}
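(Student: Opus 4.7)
My plan is to prove the running-time claim and the correctness claim separately, both essentially by formalizing the discussion immediately preceding the lemma.

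For the running time, I would observe that the dominating cost is the $t = O(k \log n)$ executions of single-source Dijkstra. Each one takes place on the auxiliary graph obtained by adding the fake source $s$ with $0$-weight edges into $R_i$, and costs $O(n \log n + m)$ under the Fibonacci-heap implementation -- yielding the claimed $O((n \log n + m) k \log n)$ bound directly. I would then bound the bookkeeping: hashing to deduplicate $D'(v)$ and extracting the $k$ smallest entries by linear-time selection takes $O(t)$ expected time per vertex, for a total of $O(n k \log n)$, which is absorbed into the Dijkstra cost.

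For correctness, I would fix $v \in \Vertices$ and a target $u \in \NNY{v}{k}$, and let $j \leq k$ be such that $u = \nnY{v}{j-1}$. The central step is to argue that whenever the event $\Event_i$ occurs -- that is, $u \in R_i$ and $R_i \cap \NNY{v}{j-1} = \emptyset$ -- the multi-source Dijkstra from $R_i$ reports $u$ as $v$'s closest sampled vertex, since no strictly closer candidate lies in $R_i$ while $u$ itself achieves distance $\distG{u}{v}$; consequently $u$ enters $D'(v)$. Combined with the bound $\Prob{\Event_i} \geq 1/(10k)$ already derived above and the independence of the samples $R_1,\ldots,R_t$, this gives $\Prob{\text{no } \Event_i \text{ occurs}} \leq (1 - 1/(10k))^t$; I would choose the hidden constant in $t = \Theta(k \log n)$ large enough to drive this below $1/n^{c+2}$ for any prescribed constant $c$.

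A union bound over the at most $n k \leq n^2$ relevant pairs $(v,u)$ then guarantees that with probability at least $1 - 1/n^c$ we have $\NNY{v}{k} \subseteq D'(v)$ for every $v$. On that event, each vertex in $D'(v) \setminus \NNY{v}{k}$ has distance strictly greater than $\distG{\nnY{v}{k-1}}{v}$ (using the assumption that the nontrivial shortest-path distances are distinct), so selecting the $k$ entries of smallest distance returns exactly $D(v) = \NNY{v}{k}$, which is the desired conclusion. I expect the only step needing care to be the verification that $\Event_i$ actually makes $u$ the arg-min of the multi-source Dijkstra launched from $R_i$; everything else is routine counting and a union bound.
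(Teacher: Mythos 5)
Your proposal is correct and follows essentially the same route as the paper: the same $t = O(k\log n)$ multi-source Dijkstra executions, the same event $\Event_i$ with the bound $\Prob{\Event_i} \geq 1/(10k)$, and the same union bound over the $nk$ relevant pairs. The two points you flag for care (that $\Event_i$ forces $u$ to be the arg-min of the Dijkstra from $R_i$, and that the $k$ smallest entries of $D'(v)$ are then exactly $\NNY{v}{k}$) are exactly the steps the paper leaves implicit, and your treatment of them is right.
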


The above randomized algorithm is inspired by the Clarkson-Shor
technique \cite{cs-arscg-89} and this trick is useful in many other
scenarios.

\section{A faster algorithm}

\subsection{The algorithm -- a first try}

We are going to run (conceptually) $n$ copies of the shortest-path
algorithm simultaneously.  In particular, let $A_v$ be the
shortest-path algorithm starting from the vertex $v$, for all
$v \in \Vertices$.  We use global heap for the events for all these
algorithms together. Here, every event of $A_v$ would be indexed by
the source vertex $v$ associated with this algorithm. The algorithm is
going to maintain for each vertex the set of $D'(v)$ of
nearest-neighbors found so far, and a count $c_v = \cardin{D'(v)}$.

Now, when the algorithm extract the next vertex to be visited (i.e.,
the one of the with lowest candidate distance), we get a triple
$(v, s, d)$, where $v$ is the vertex to be handled, $s$ is the source
vertex, and $d$ is the proposed distance. Using a hash table, we check
in constant time whether $v$ has $s$ as one of its computed nearest
neighbors (i.e., check if $s \in D'(v)$), and if so the algorithm
continues to the next iteration. Otherwise, the algorithm
\begin{compactenum}[\qquad(i)]
    \item adds $s$ to $D'(v)$,
    \item increase $c_v$, and 
    \item perform the standard relax operation from $v$ for all the
    outgoing edges of $v$ (these operations are ``marked'' by
    the source vertex $s$ they are being done for).

    Specifically, consider an edge $\dirEdge{v}{z}$ being relaxed,
    during the handling of the event $(v,s,d)$. The new associated
    event is $\pth{z, s, d+ w(\dirEdge{v}{z})}$, where
    $w(\dirEdge{v}{z})$ is the weight of the edge $\dirEdge{v}{z}$. If
    $s \notin D'(z)$ the algorithm ``schedule'' this event by
    inserting it to the heap, otherwise it ignores it.
\end{compactenum}

\medskip

The basic observation is that once a vertex was visited by $k$ of
these parallel executions, it is no longer needed, and it can be
``disabled'' blocking it from being visited by any other Dijkstra
(i.e., as soon as $c_v = k$). From this point on, the algorithm
ignores update operations for triplets of the form $(v, \cdot,
\cdot)$.

\subsubsection{Analysis}

\begin{lemma}
    The algorithm computes correctly, for each vertex $v \in
    \Vertices$, its $k$ nearest-neighbors in $\Graph$.
\end{lemma}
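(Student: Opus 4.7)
The plan is to show $D'(v) = \NNY{v}{k}$ at termination, for every vertex $v$. Since the algorithm caps $\cardin{D'(v)}$ at $k$ and $\cardin{\NNY{v}{k}} = k$, it suffices to prove the inclusion $\NNY{v}{k} \subseteq D'(v)$, which reduces to the following invariant: for every pair $(v,s)$ with $s \in \NNY{v}{k}$, the triple $(v, s, \distG{s}{v})$ eventually gets inserted into the heap and is \emph{processed} (not ignored by the disable-check nor by the hash-table duplicate filter).

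I would prove this by induction on the pairs $(v, s)$ ordered by $\distG{s}{v}$. The base case $s = v$ is immediate: the triples $(v,v,0)$ are seeded into the heap, are the minimum-key entries, and nothing is disabled yet. For the inductive step, fix $s \in \NNY{v}{k}$ with $s \neq v$ and let $s = u_0, u_1, \ldots, u_\ell = v$ be the shortest $s$-to-$v$ path. The crucial sub-claim is that every intermediate $u_j$ also satisfies $s \in \NNY{u_j}{k}$: otherwise at least $k$ distinct sources $s'$ would satisfy $\distG{s'}{u_j} < \distG{s}{u_j}$, whence the triangle inequality gives $\distG{s'}{v} \le \distG{s'}{u_j} + \distG{u_j}{v} < \distG{s}{u_j} + \distG{u_j}{v} = \distG{s}{v}$, yielding $k$ distinct sources strictly closer to $v$ than $s$ and contradicting $s \in \NNY{v}{k}$. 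Since each pair $(u_j, s)$ has strictly smaller distance coordinate, the inductive hypothesis says that every event $(u_j, s, \distG{s}{u_j})$ is processed, so the relaxation along $\dirEdge{u_{\ell-1}}{v}$ inserts $(v, s, \distG{s}{v})$ into the heap.

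It remains to show that $(v, s, \distG{s}{v})$ is processed when extracted. Events come off the heap in nondecreasing order of the distance coordinate, so any triple $(v, s'', d'')$ processed strictly earlier has $d'' < \distG{s}{v}$; since $d''$ is the length of some $s''$-to-$v$ walk, $\distG{s''}{v} \le d'' < \distG{s}{v}$. Writing $s = \nnY{v}{i}$ with $i < k$, such an $s''$ must lie in $\NNY{v}{i}$, a set of $i$ vertices, and the hash-table filter ensures each $s''$ contributes to $c_v$ at most once. Hence $c_v \le i < k$ at extraction time, $v$ is not yet disabled, and the triple is processed, adding $s$ to $D'(v)$ and closing the induction.

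The main obstacle is the triangle-inequality sub-claim in the middle paragraph: it is what justifies disabling a vertex the instant $k$ executions touch it, and it simultaneously makes the induction on $\distG{s}{v}$ well-founded (the distinctness-of-distances assumption rules out ties).
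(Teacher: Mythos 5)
Your proof is correct and follows essentially the same route as the paper's: induction on shortest-path distance, with the key observation that if $s=\nnY{v}{i}$ then $s$ is among the first $i$ nearest-neighbors of every vertex on the shortest $s$-to-$v$ path (else the subpath/triangle inequality would produce too many closer sources). Your final paragraph merely fills in, more carefully than the paper does, why the extracted event is not blocked by the $c_v = k$ disable check.
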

\begin{proof}
    By induction on the distance being computed. For $\ell=0$, for
    each vertex $v \in \Vertices$, the vertex $v$ is its own $0$\th
    nearest-neighbor, of distance $0$, and it was computed
    correctly. 

    Assume all the relevant distances $< \ell$ were computed correctly
    (they are $0,1,\ldots, k$ nearest-neighbor distances for some
    pairs of vertices in the graph).

    So consider a vertex $v$, where its $i$\th nearest-neighbor is
    $s$, let its (real) shortest path from $s$ to $v$ be $\pi = s =
    v_0 v_1 v_2 \ldots v_t = v$, and assume the length of $\pi$ is
    $\ell$.

    The key observation is that for all the vertices on $\pi$, the
    vertex $s$ must be one of their first $i$
    nearest-neighbors. Indeed, if not, then there exists $i$ vertices
    in $\Graph$ that are closer to $v$ than $s$, which is a
    contradiction.

    Thus, by induction all the relevant distances for $v_0, v_1,
    \ldots, v_{t-1}$ were computed correctly by the algorithm, and as
    such, this path would be considered by the algorithm. Namely, the
    $i$\th distance to $v$ would be set correctly.
\end{proof}

\paragraph{Running time.}
Since for a vertex $v$, the counter $c_v$ can be increased only $k$
times, it follows that every edge participates in $k$ relax
operations. As such, the total number of relax operations handled by
this algorithm is $O(k m)$, and this bounds the maximum size of the
global heap. The global heap might perform $O(k m)$ extract-min
operations on the global heap. As such, the overall running time of
this algorithm is $O( k m \log n)$.

\subsection{Speeding up the algorithm}

Now, our purpose is to improve the running time -- this requires some
cleverness with the data-structures being used. 

Our first task is to avoid having extract-min operations in the global
heap involving a vertex $v$ such that $c_v = k$. To this end, for each
vertex $v \in \Vertices$, the algorithm maintains a separate queue
$Q_u$ that handles all the events for $u$. Every source vertex
$s \in \Vertices$ might maintain at most one value at $Q_u$. Every
such queue would maintain its current minimum, and would update it in
the global queue $Q$ as necessary (all of these operations are either
\texttt{insert} or \texttt{decrease\_key}, and both operations can be
done in constant time using the standard Fibonacci heap). That is, the
global queue contains at most $n$ values, potentially one from each
vertex queue.  In particular, once $c_v \geq k$, we disable the queue
$Q_v$, and it not longer participates in the global queue, or takes
updates to values in its queue.

\subsection{Analysis}

\begin{lemma}
    The algorithm running time is $O\pth{ k \pth{ n \log n + m}}$.
\end{lemma}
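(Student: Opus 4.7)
The plan is to charge every unit of work to one of three kinds of events, bound the number of each, and then multiply by the amortized per-operation cost of a Fibonacci heap. The starting point is the same observation that carried the $O(km \log n)$ bound in the first try: for each vertex $v$, the counter $c_v$ can be incremented at most $k$ times, because the moment $c_v = k$ the queue $Q_v$ is disabled and no further events concerning $v$ are processed or inserted. In particular, the algorithm processes at most $O(kn)$ events globally.

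Each processed event at $v$ is produced by exactly one extract-min from the global heap $Q$ and exactly one extract-min from the local heap $Q_v$, so the total number of extract-min operations on all heaps (global plus local) is $O(kn)$. Each such extract-min costs $O(\log n)$ amortized in a Fibonacci heap, because the global heap never holds more than $n$ items (one active $Q_v$-minimum per enabled vertex) and each $Q_v$ holds at most one entry per source, hence at most $n$ items. This accounts for $O(kn \log n)$ work.

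For the remaining work, the key accounting is that a processed event at $v$ triggers at most $\deg(v)$ relax operations, and $v$ is processed at most $k$ times, so the total number of relaxations over the course of the algorithm is bounded by $\sum_v k \cdot \deg(v) = O(km)$. Each relaxation is implemented as one Fibonacci-heap \texttt{insert} or \texttt{decrease\_key} on some local queue $Q_z$ (which is which is decided in $O(1)$ time by a hash lookup, exploiting the invariant that each source maintains at most one entry in $Q_z$), and may in addition cause a single \texttt{decrease\_key} on the global queue $Q$ if it lowers the minimum of $Q_z$. Both of these are $O(1)$ amortized, so the total contribution of relaxations is $O(km)$. Summing gives the claimed $O(k(n\log n + m))$ bound.

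The main obstacle I expect is verifying that the bookkeeping genuinely gives $O(1)$ per relax on the global heap: one must check that an update to $Q_z$'s minimum translates into a single \texttt{decrease\_key} on $Q$ (not a delete-and-reinsert, which would be $O(\log n)$), and that once $Q_v$ is disabled it is excised from $Q$ cleanly so it never contributes to future extract-min work. Granting that the Fibonacci heap maintains the per-source uniqueness invariant on each $Q_v$ via \texttt{decrease\_key} whenever an event with an already-present source is re-offered, the rest of the accounting is straightforward.
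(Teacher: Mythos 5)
Your proposal is correct and follows essentially the same decomposition as the paper's proof: $O(km)$ relax operations at $O(1)$ amortized cost each (via Fibonacci-heap \texttt{insert}/\texttt{decrease\_key}), plus $O(kn)$ \texttt{extract\_min} operations at $O(\log n)$ each, since each vertex is finalized at most $k$ times. You simply spell out the bookkeeping (the per-source uniqueness in each $Q_v$ and the propagation of local minima to the global heap) more explicitly than the paper does.
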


\begin{proof}
    Every vertex distance is going to be set by at most $k$ of these
    ``parallel'' executions of Dijkstra. Whenever it happens, the
    algorithm performs a relax operation on all the adjacent edges. It
    follows, that an edge would be relaxed $O( k)$ times. The cost of
    a relax operation is $O(1)$, and thus the overall cost of these is
    $O( k m)$. Setting one of the $k$ shortest distance values of a
    vertex takes $O(\log n)$ time, as it involves
    \texttt{extract\_min} from the global heap. As such, each vertex
    would require overall $O( k \log n)$ time for its
    \texttt{extract\_min} operations from the global heap and the
    local heap of this vertex.  The bound on the running time now
    follows readily.
\end{proof}

\subsection{The result}

\begin{theorem}%
    \thmlab{main}%
    Given a directed graph $\Graph = (\Vertices,\Edges)$ with $n$
    vertices and $m$ edges, with positive weights on the edges, and a
    parameter $k >0$, one can compute, in
    $O\pth{ k \pth{ n \log n + m}}$ time, for every vertex
    $v \in \Vertices$, its $k$ nearest-neighbors in $\Graph$.

    The algorithm uses hashing (i.e., randomization), a deterministic
    version of the algorithm runs in time
    $O\pth{ k n \log n + k m \log k }$.
\end{theorem}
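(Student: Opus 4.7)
The plan is to assemble the two preceding lemmas to obtain the randomized part of the theorem: the correctness lemma guarantees that the algorithm returns $\NNY{v}{k}$ for every $v \in \Vertices$, and the running-time lemma yields the $O(k(n\log n+m))$ bound. In the randomized implementation, the only use of randomness is the hash table maintaining $D'(v)$, which supports the membership test ``is $s \in D'(v)$?'' and the insertion ``add $s$ to $D'(v)$'' in expected constant time.

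For the deterministic version, I would replace each $D'(v)$ by a balanced binary search tree keyed on source-vertex identifiers. Since the queue $Q_v$ is disabled the moment $c_v = k$, the invariant $|D'(v)| \leq k$ is maintained throughout, so every lookup and insertion costs $O(\log k)$ deterministically. The counting of relax operations is unaffected: each edge is relaxed $O(k)$ times, for $O(km)$ relaxations in total, and each one now pays an extra $O(\log k)$ for its membership test, contributing $O(km\log k)$. The heap work is likewise unchanged: using a Fibonacci heap with $O(1)$ amortized \texttt{insert}/\texttt{decrease\_key} and $O(\log n)$ amortized \texttt{extract\_min}, the $O(kn)$ extract-min operations on the global queue and the per-vertex queues together contribute $O(kn\log n)$. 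Summing the two terms yields the claimed $O(kn\log n + km\log k)$ deterministic bound.

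The main thing to verify carefully is that the BST substitution is genuinely local to the $D'(v)$ data structure and does not perturb the two-level Fibonacci-heap analysis described in the previous subsection; since the $D'(v)$ tables are orthogonal to the queue machinery, this should go through verbatim, with the only change being that the $O(1)$ cost of the membership/insertion test inside each relax is replaced by $O(\log k)$.
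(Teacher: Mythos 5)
Your randomized half is fine and matches the paper: you assemble the two preceding lemmas exactly as intended, and the $O(k(n\log n+m))$ bound follows. The gap is in the deterministic half. You replace only $D'(v)$ --- the set of \emph{confirmed} neighbors, which indeed has size at most $k$ by the disabling rule --- and assert that the queue machinery is ``orthogonal'' and untouched. But the per-vertex queue $Q_v$ itself relies on a dictionary keyed by source: to maintain the invariant that each source $s$ keeps at most one value in $Q_v$ (equivalently, to decide between \texttt{insert} and \texttt{decrease\_key} and to locate the heap node whose key should decrease), every relax into $v$ must look up $s$ among the \emph{pending} candidates of $v$, not just in $D'(v)$. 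That invariant is not cosmetic: without it a single source can occupy many slots of $Q_v$, the number of local extract-min/discard steps grows to $\Theta(km)$ at $\Theta(\log n)$ each, and even the randomized bound fails for $m \gg n$. Now $Q_v$ is not of size $O(k)$ a priori --- it can hold one pending candidate for every source that has relaxed an edge into $v$, up to $\Theta(n)$ of them --- so naively swapping its hash table for a balanced tree costs $O(\log n)$ per relax, i.e.\ $O(km\log n)$ in total, not the claimed $O(km\log k)$.

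The missing idea, which is the actual content of the paper's proof, is that the per-vertex candidate structure need only retain the $k$ smallest candidate distances ever offered to $v$: a new candidate larger than the current $k$ smallest is rejected outright, and after an insertion anything beyond the $k$ smallest is evicted. This caps every per-vertex lookup structure at $O(k)$ elements, so each of the $O(k(n+m))$ dictionary operations costs $O(\log k)$, which together with the $O(kn)$ global extract-min operations at $O(\log n)$ each gives $O(kn\log n + km\log k)$. One must also verify that eviction is safe; it is, because when the true shortest path from the $i$-th nearest neighbor $s$ (with $i \le k$) finally relaxes into $v$, any candidate smaller than $\distG{s}{v}$ belongs to a source strictly closer to $v$ than $s$, and there are fewer than $k$ of those, so the true candidate is never among the evicted. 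Without this pruning observation your accounting of the deterministic running time does not go through.
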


\begin{proof}
    For the deterministic running time, replace the hash table
    maintained by each vertex $v$ (which stores all the
    nearest-neighbors to $v$ discovered so far) by a balanced binary
    tree.  The key observation is that such a tree, for a node $v$,
    needs to maintain only the $k$ smallest candidate distances
    offered to this vertex. In particular, if a new candidate distance
    is larger than all these $k$ values, we can immediately reject
    it. Similarly, after insertion we reject any value larger than $k$
    smallest values in this data-structure.

    As such, each binary tree stores at most $O(k)$ elements, so every
    basic operation takes $O( \log k)$ time. The algorithm performs
    $O( k(n+m))$ operations on these lookup data structures, which
    implies the claimed running time.
\end{proof}

\begin{remark}
    Consider the settings of \thmref{main}, but in addition there is a
    set of terminals $T \subseteq \Vertices$. One can compute for each
    vertex of $\Graph$ the $k$ closest \emph{terminals} to it in
    $\Graph$ using the algorithm of \thmref{main}. The only
    modification being that we start the Dijkstra process only from
    the vertices of $T$. The running time of the modified algorithm is
    the same.
\end{remark}


%


\paragraph*{Acknowledgments.}

The author thanks Chandra Chekuri and David Eppstein for useful
comments on this note. In addition, Vivek Madan came up independently
with a similar solution to the one described in this note.

 
 \providecommand{\CNFX}[1]{ {\em{\textrm{(#1)}}}}
  \providecommand{\tildegen}{{\protect\raisebox{-0.1cm}{\symbol{'176}\hspace{-0.03cm}}}}
  \providecommand{\SarielWWWPapersAddr}{http://sarielhp.org/p/}
  \providecommand{\SarielWWWPapers}{http://sarielhp.org/p/}
  \providecommand{\urlSarielPaper}[1]{\href{\SarielWWWPapersAddr/#1}{\SarielWWWPapers{}/#1}}
  \providecommand{\Badoiu}{B\u{a}doiu}
  \providecommand{\Barany}{B{\'a}r{\'a}ny}
  \providecommand{\Bronimman}{Br{\"o}nnimann}  \providecommand{\Erdos}{Erd{\H
  o}s}  \providecommand{\Gartner}{G{\"a}rtner}
  \providecommand{\Matousek}{Matou{\v s}ek}
  \providecommand{\Merigot}{M{\'{}e}rigot}
  \providecommand{\CNFSoCG}{\CNFX{SoCG}}
  \providecommand{\CNFCCCG}{\CNFX{CCCG}}
  \providecommand{\CNFFOCS}{\CNFX{FOCS}}
  \providecommand{\CNFSODA}{\CNFX{SODA}}
  \providecommand{\CNFSTOC}{\CNFX{STOC}}
  \providecommand{\CNFBROADNETS}{\CNFX{BROADNETS}}
  \providecommand{\CNFESA}{\CNFX{ESA}}
  \providecommand{\CNFFSTTCS}{\CNFX{FSTTCS}}
  \providecommand{\CNFIJCAI}{\CNFX{IJCAI}}
  \providecommand{\CNFINFOCOM}{\CNFX{INFOCOM}}
  \providecommand{\CNFIPCO}{\CNFX{IPCO}}
  \providecommand{\CNFISAAC}{\CNFX{ISAAC}}
  \providecommand{\CNFLICS}{\CNFX{LICS}}
  \providecommand{\CNFPODS}{\CNFX{PODS}}
  \providecommand{\CNFSWAT}{\CNFX{SWAT}}
  \providecommand{\CNFWADS}{\CNFX{WADS}}


\end{document}